\newtheorem{definition}{Definition}
\newtheorem{theorem}{Theorem}
\newcommand\independent{\protect\mathpalette{\protect\independenT}{\perp}}
\def\independenT#1#2{\mathrel{\rlap{$#1#2$}\mkern2mu{#1#2}}}
\DeclareMathOperator*{\argmin}{argmin}
\DeclareMathOperator*{\argmax}{argmax}
\begin{document}


\title{\bf Certifiably Optimal Sparse Sufficient Dimension Reduction}

\author{Lei Yan and Xin Chen\thanks{  Corresponding author} \\
Department of Statistics and Data Science, Southern University of\\
Science and Technology}

\date{\today}
\maketitle

\bigskip
\begin{abstract}
Sufficient dimension reduction (SDR) is a popular tool in regression analysis, which replaces the original predictors with a minimal set of their linear combinations. However, the estimated linear combinations generally contain all original predictors, which brings difficulties in interpreting the results, especially when the number of predictors is large. In this paper, we propose a customized branch and bound algorithm, optimal sparse generalized eigenvalue problem (Optimal SGEP), which combines a SGEP formulation of many SDR methods and efficient and accurate bounds allowing the algorithm to converge quickly. Optimal SGEP exactly solves the underlying non-convex optimization problem and thus produces certifiably optimal solutions. We demonstrate the effectiveness of the proposed algorithm through simulation studies. 
\end{abstract}

\noindent{\it Keywords:} Sufficient dimension reduction; Variable selection; Mixed integer optimization; Branch and bound; 

\section{Introduction}  
In regression analysis, sufficient dimension reduction (SDR) (\citet{li1991sliced,cook1994interpretation,cook1998regression}) is a statistical technique developed to reduce the dimension of covariate without loss of regression information. For regression problems with a response variable $y \in \mathbb{R}$ and a predictor vector $\mathbf{x} \in \mathbb{R}^p$, a dimension reduction subspace is any subspace $\mathcal{S} \subseteq \mathbb{R}^p$ such that $y$ is independent of $\mathbf{x}$ given the orthogonal projection of $\mathbf{x}$ on to $\mathcal{S}$:
\begin{equation}
    y \independent \mathbf{x} | P_{\mathcal{S}}\mathbf{x},
\end{equation}
where $\independent$ represents independence. The intersection of all dimension reduction subspaces, denoted by $\mathcal{S}_{ y|\mathbf{x}}$, is called central subspace. Under mild conditions, the central subspace exists and is the unique minimum dimension reduction subspace (\citet{cook1998regression,yin2008successive}). The structural dimension $d = \text{dim}(\mathcal{S}_{y|\mathbf{x}})$ is generally far less than the original dimension $p$, which implies we can greatly reduce the dimensionality of the covariate space. The primary goal of SDR is to estimate such central subspaces.

Many methods for estimating the central subspace have been proposed since the introduction of sliced inverse regression (SIR) (\citet{li1991sliced}) and sliced average variance estimation (SAVE) (\citet{cook1991discussion}). \citet{li1992principal} and \citet{cook1998principal} proposed principal Hessian directions (PHD). \citet{li2007directional} proposed directional regression (DR) which combines advantages of SIR and SAVE. \citet{cook2007fisher} discussed principal fitted components (PFC), a likelihood based method which is efficient in estimating the central subspace. \citet{ma2012semiparametric,ma2013efficient} adopted a semiparametric approach and developed an efficient estimation procedure for estimating the central subspace.

One major drawback of the aforementioned SDR methods is the estimated sufficient dimension reduction directions contain all $p$ predictors, which brings difficulty in interpreting these directions and identifying important variables. Many attempts have been made to tackle this problem. \citet{li2005model} proposed a model-free variable selection approach. \citet{ni2005note} employed LASSO penalty (\citet{tibshirani1996regression}) to obtain a shrinkage SIR estimator. \citet{li2007sparse} studied a unified framework called sparse SDR, which can apply to most SDR methods. \citet{chen2010coordinate} proposed coordinate-independent sparse estimation (CISE) which simultaneously achieves dimension reduction and variable selection. \citet{tan2018convex} formulated the sparse SIR as a convex optimization problem and solved it using linearized alternating direction method of multipliers. Almost at the same time, \citet{tan2018sparse} proposed a more general algorithm called truncated Rayleigh flow (Rifle).

In all the methods introduced above, the algorithms of \citet{li2007sparse}, \citet{chen2010coordinate} and \citet{tan2018sparse} perform variable selection via solving a sparse generalized eigenvalue problem (SGEP) and thus they are suitable for many SDR methods. The relationship between these SDR methods and a generalized eigenvalue problem is reviewed in Section \ref{sec_prob_formulation}. However, their methods either impose heuristic $\ell_1$ penalty, not the intrinsic $\ell_0$ constraint, or solve non-convex problems inexactly and a globally optimal solution is usually not guaranteed.

SGEP with a $\ell_0$ constraint is a mixed integer optimization (MIO) problem, which belongs to a class of NP-hard problems (\citet{moghaddam2006generalized}). Traditionally, NP-hard problems have been deemed intractable and methods that solve problems with practical size in reasonable time were unavailable. With the rapid development of MIO solvers (\cite{bixby2012brief}), such CPLEX and Gurobi, some researchers have attempted to apply MIO to problems in Statistics (\citet{carrizosa2014rs,asteris2015sparse,bertsimas2016best,bertsimas2017optimal,berk2019certifiably,bertsimas2020sparse}). 

Motivated by those work, we fill in the blank of solving sparse SDR with MIO in this paper. We propose a customized branch and bound algorithm that enable us to solve non-convex SGEP with $\ell_0$ a constraint to certifiable optimality. In this paper, we assume that $n > p$ and the structural dimension $d$ is known. 

\subsection{Notation}
Boldface lowercase letters like $\mathbf{v}$, uppercase letters like $A$ and $B$, and script capital letter like $\mathcal{I}$ are used to represent vectors, matrices and sets, respectively. For a set $\mathcal{I}$, we use $|\mathcal{I}|$ to denote its cardinality. Let $\lambda_{\max}(A)$ and $\lambda_{\min}(A)$ be the largest and smallest eigenvalues correspondingly, and $\lambda_{\max}(A, B)$ be the largest generalized eigenvalue of a matrix pair $(A, B)$. We denote the nuclear norm of A by $||A||_* = tr(A)$, the spectral norm of A by $||A||_2=\sqrt{\lambda_{\max}(A^{\intercal}A)}$, and the Frobenius norm of A by $||A||_F = \sqrt{tr(A^{\intercal}A)} $. For a matrix $A$, $A_{ij}$ and $A_{i\cdot}$ represent the element in row $i$ and column $j$ and the $i$-th row, respectively. For $\mathcal{I} \in \{1, 2, \dots, p \}$ and $\mathbf{v} \in \mathbb{R}^p$, let $\mathbf{v}_{\mathcal{I}}$ be the subvector of $\mathbf{v}$ where elements are restricted to the set $\mathcal{I}$. Finally, let $\mathbf{e}_i$ be a vector whose elements are all zero, except the $i$-th element that equals 1, and $\mathbb{S}^p_{+}$ be the set of semi-positive matrices.

\subsection{Organization}
This paper is organized as follows. In Section \ref{sec_prob_formulation}, we review how to write SDR methods as a generalized eigenvalue problem and formulate the SGEP into a MIO form. We describe a branch and bound algorithm in Section \ref{sec_B&B} which solves the MIO formulation to certifiable optimality. Our branch and bound algorithm benefits greatly from the bound functions. Thus, in Section \ref{sec_upper_lower_bounds}, we derive three upper bounds and two lower bounds for the sub-problems of the MIO problem. In Section \ref{node_selection_branch}, we provide some suggestions about methods of branching and selecting nodes, as well as about details on computing lower bounds, through numerical experiments. The convergence property of our proposed algorithm is established in Section \ref{convergence}. To select the tinning parameter $k$, AIC and BIC criteria are introduced in Section \ref{sec_tuning}. In Section \ref{simulation_section}, we report results of simulation studies, which demonstrate the superior performance of our proposed algorithm. Finally, we gives some concluding remarks in Section \ref{sec_conclusion}.

\section{Methodology}  
\subsection{Problem formulation} \label{sec_prob_formulation}
It is shown that many sufficient dimension reduction methods can be formulated as a generalized eigenvalue problem (\citet{li2007sparse,chen2010coordinate}) of the form
\begin{equation}
  \begin{aligned}
  &A\mathbf{v}_i = \lambda_i B \mathbf{v}_i, \quad i = 1, 2, \dots, p, \\
  &\text{s.t. } \mathbf{v}_i^{\intercal}B\mathbf{v}_j = \begin{cases}
  1 & \text{if } i = j, \\
  0 & \text{if } i \neq j,
  \end{cases}
  \end{aligned}
  \label{gep}
  \end{equation}
where $A$ is a method specific positive semi-definite matrix; $B$ is a symmetric positive definite matrix, which is the sample covariance matrix in most methods; $\lambda_1 \geq \dots \geq \lambda_p $ are generalized eigenvalues; and $\mathbf{v}_1, \dots, \mathbf{v}_p$ are corresponding generalized eigenvectors. Under some mild conditions on the marginal distribution of $\mathbf{x}$, it can be shown that the first $d$ eigenvectors span the central subspace. Table \ref{gepsummrize} summarizes some commonly used SDR methods and the corresponding matrices A and B.

\begin{center}
\begin{table}[htbp] 
 \caption{\label{gepsummrize}The GEP formulation of some SDR methods} 
 \begin{tabular}{lcl} 
  \toprule 
  Method & A & B \\ 
  \midrule 
 SIR & $\text{cov}[E\{\textbf{x} - E(\textbf{x})\} \mid y ]$ & $\Sigma_\mathbf{x} $ \\ 
 PFC & $\Sigma_{fit}$ & $\Sigma_{\mathbf{x}}$ \\
 SAVE & $\Sigma_\mathbf{x}^{1 / 2} E\left[\{I-\operatorname{cov}(\textbf{z} \mid y)\}^{2}\right] \Sigma_\mathbf{x}^{1 / 2},$ where $\textbf{z}=\Sigma_\mathbf{x}^{-1 / 2}\{\mathbf{x}-E(\mathbf{x})\}$  & $\Sigma_\mathbf{x} $ \\ 
 PHD(y-based) & $\Sigma_{\mathbf{x}}^{1 / 2} \Sigma_{y\mathbf{z z}} \Sigma_{y\mathbf{ z z}} \Sigma_{\mathbf{x}}^{1 / 2},$ where $\Sigma_{y\mathbf{z z}}=E\left[\{y-E(y)\} \textbf{z} \textbf{z}^{\top}\right]$ & $\Sigma_\mathbf{x} $ \\ 
 PHD(r-basied) & $\Sigma_{\mathbf{x}}^{1 / 2} \Sigma_{\mathbf{r z z}} \Sigma_{\mathbf{r z z}} \Sigma_{\mathbf{x}}^{1 / 2} , $ where $\Sigma_{\mathbf{r z z}}=E\left[\left\{y-E(y)-E\left(y \mathbf{z}^{\intercal}\right) \mathbf{z}\right\} \mathbf{z} \mathbf{z}^{\intercal}\right]$ & $\Sigma_{\mathbf{x}} $ \\
 DR  & $\Sigma_{\mathbf{x}}^{1 / 2}\left\{2 E\left[E^{2}\left(\mathbf{z} \mathbf{z}^{\intercal} \mid y\right)\right]+2 E^{2}\left[E(\mathbf{z} \mid y) E\left(\mathbf{z}^{\intercal} \mid y\right)\right]\right.$ & \\
& $\left.+2 E[E(\mathbf{z} \mid y) E(\mathbf{z} \mid y)] E\left[E(\mathbf{z} \mid y) E\left(\mathbf{z}^{\intercal} \mid y\right)\right]-2 \mathbf{I}_{p}\right\} \Sigma_{\mathbf{x}}^{1 / 2}$ & $\Sigma_{\mathbf{x}}$ \\
  \bottomrule 
 \end{tabular} 
\end{table}
 \end{center}

Note that the first eigenvector $\mathbf{v}_1$ can be characterized as 
\begin{equation}
    \max_{\mathbf{v}} \mathbf{v}^{\intercal}A\mathbf{v}, \quad \text{s.t. } \mathbf{v}^{\intercal}B\mathbf{v} = 1.
    \label{eq_gep}
\end{equation}
Denote the solution of \eqref{eq_gep} by $\hat{\mathbf{v}}_1$. To obtain the second eigenvector, we first project the matrix $A$ into the subspace perpendicular to $\hat{\mathbf{v}}_1$:
$$
A_2 = \left(I - \frac{B\hat{\mathbf{v}}_1(B\hat{\mathbf{v}}_1)^{T}}{||B\hat{\mathbf{v}}_1||^2_2} \right)A\left(I - \frac{B\hat{\mathbf{v}}_1(B\hat{\mathbf{v}}_1)^{T}}{||B\hat{\mathbf{v}}_1||^2_2} \right),
$$
and then solve the GEP with matrix $A_2$. Subsequent eigenvectors can be found in the same way. When $p$ is large, we assume the sufficient directions or eigenvectors are sparse. The first sparse eigenvector can be obtained by solving the sparse generalized eigenvalue problem (SGEP)
\begin{equation}
     \max_{\mathbf{v}} \mathbf{v}^{\intercal}A\mathbf{v}, \quad \text{s.t. } \mathbf{v}^{\intercal}B\mathbf{v} = 1, ||\mathbf{v}||_0 \leq k,
     \label{sgep}
\end{equation}
where the $\ell_0$ norm constrains the number of nonzero elements. 

Now we rewrite problem \eqref{sgep} into an explicit mixed integer optimization form called SGEP-MIO

\begin{equation}
    \begin{aligned}
        \max_{\mathbf{v, w}} \quad & \mathbf{v}^{\intercal}A\mathbf{v}  \\
         \text{s.t. } \quad & \sum_{i=1}^p \lambda_i(\mathbf{s}_i^{\intercal}\mathbf{v})^2 = 1   \\
         & -w_iM_i \leq v_i \leq w_iM_i, \quad i = 1, \dots, p \\
         & \sum_{i=1}^p w_i=k  \\
         & \mathbf{v} \in [-M_{\text{max}}, M_{\text{max}}]^p \\
         & \mathbf{w} \in \{ 0, 1 \}^p, \\
    \end{aligned}
    \label{sgep_mio}
\end{equation}

where $\lambda_i$ is the $i$-th largest eigenvalue of B and $\mathbf{s}_i$ is the corresponding eigenvector; $M_i > 0$ is a large number related to $B$ and $M_{\text{max}} = \max_{0 \leq i \leq p} \{ M_i \}$; $\mathbf{w}$ is the support vector of $\mathbf{v}$ which controls the sparsity. 

\subsection{A branch and bound algorithm}\label{sec_B&B}
In general, the SGEP-MIO is non-convex, NP-hard and therefore intractable (\citet{moghaddam2006generalized}). For the special case $A = \Sigma_x, B = I$, the existing mixed integer optimization solvers was unable to solve problems with $n=13, k=10$ in a hour (\citet{berk2019certifiably}). In this section, we'll develop an efficient branch and bound algorithm called Optimal SGEP which exactly solves the non-convex problem \eqref{sgep_mio}.  

The Optimal SGEP constructs a tree of subproblems which are formed by determining some elements of the support vector $\mathbf{w}$:

\begin{equation}
    \begin{aligned}
        \max_{\mathbf{v, w}} \quad & \mathbf{v}^{\intercal}A\mathbf{v}  \\
         \text{s.t. } \quad & \sum_{i=1}^p \lambda_i(\mathbf{s}_i^{\intercal}\mathbf{v})^2 = 1   \\
         & -w_iM_i \leq v_i \leq w_iM_i, \quad i = 1, \dots, p \\
         & \sum_{i=1}^p w_i=k  \\
         & \mathbf{v} \in [-M_{\text{max}}, M_{\text{max}}]^p \\
         & \mathbf{l} \leq \mathbf{w} \leq \mathbf{u}, \quad \mathbf{l, w, u} \in \{ 0, 1 \}^p, \\
    \end{aligned}
    \label{sgep_pds}
\end{equation}
where $\mathbf{l, u}$ are lower and upper bounds on $\mathbf{w}$. Each node of the tree is identified by a pair $(\mathbf{l, u})$. Beginning from the original problem \eqref{sgep_mio} which corresponds to $(\mathbf{l} = \{ 0 \}^p, \mathbf{u} = \{ 1 \}^p)$, the entire tree consists of $2^{p+1} - 1$ nodes. If we can solve all the subproblems, the original SGEP-MIO is solved. However, enumerating the entire tree is inefficient and impractical. Fortunately, we can avoid exploring some unnecessary nodes or subtrees. 

Denote the feasible set of problem \eqref{sgep_pds} by $\mathcal{V}(\mathbf{l, u}, k, B)$. In the case of $\sum_{i=1}^p l_i \leq k$ and $\sum_{i=1}^p u_i \geq k$, $\mathcal{V}(\mathbf{l, u}, k, B)$ is an non-empty set. In particular, if $\sum_{i=1}^p l_i = \sum_{i=1}^p u_i = k$, the support $\mathbf{w}$ is fixed. By eliminating rows and columns of $A, B$ that correspond to $w_i = 0$, this subproblem becomes a regular generalized eigenvalue problem which can be solved efficiently. Note that if we continue branching on this kind of node to produce new nodes, the feasible sets will be empty. Thus we call nodes with $\sum_{i=1}^p l_i \geq 
k, \sum_{i=1}^p u_i \leq k$ terminal nodes. By testing whether a node is terminal, we can reduce the number of nodes to explore. Furthermore, we only need to investigate nodes with possible superior solutions. In other words, we only need to explore nodes with upper bounds greater than the current estimated optimal solution. Actually, the performance of branch and bound algorithm relies heavily on two functions that calculate the upper and lower bounds of a subproblem
\begin{equation}
\begin{aligned}
upper(\mathbf{l, u}, k, B) &\geq \max \mathbf{v}^{\intercal}A\mathbf{v}, \quad \mathbf{v} \in \mathcal{V}(\mathbf{l, u}, k, B), \\
lower(\mathbf{l, u}, k, B) &= \mathbf{v}^{\intercal}A\mathbf{v}, \quad \mathbf{v} \in \mathcal{V}(\mathbf{l, u}, k, B).
\end{aligned}
\label{ub_lb_func}
\end{equation}
For now, we assume these bounds can be computed efficiently and we'll derive them in Section \ref{sec_upper_lower_bounds}.

The main crux of the Optimal SGEP algorithm is as follows. Starting from the root $(\mathbf{l, u}) = \left( \{ 0 \}^p, \{ 1 \}^p\right)$, we maintain a set of unexplored nodes. In each iteration, one node is selected and two new nodes are created by branching on a dimension. If the new nodes are terminal, the subproblem becomes a regular GEP and the exact solution is returned. Otherwise, the lower and upper bounds of the nodes are computed using corresponding functions. We update the optimal solution if it is superior to the best solution found so far. Some nodes are deleted since their upper bounds are smaller than the best feasible solution. The above steps are repeated until the difference between the global upper bound and the optimal solution is within $\epsilon$.

\newpage

\begin{algorithm}[H]
\SetAlgoLined
\SetKwInOut{Input}{Input}
\SetKwInOut{Output}{Output} 
\Input{matrices $A, B$, cardinality $k$, tolerance $\epsilon$}
\Output{sparse vector $\hat{\mathbf{v}}$ with $||\hat{\mathbf{v}}||_0 \leq k$ and $||\hat{\mathbf{v}}^{\intercal}B\hat{\mathbf{v}}||_2 = 1$ that within the optimality tolerance $\epsilon$}
Initialize $n_0 = (\mathbf{l, u}) = \left( \{ 0 \}^p, \{ 1 \}^p\right)$ as the root node \;
Initialize node set $\mathcal{N} = \{ n_0 \}$ \;
Initialize $\hat{\mathbf{v}}$ and lower bound $lb$ using customized Rifle\;
Initialize upper bound $ub = \lambda_\text{max}(A, B)$\;
\While{$ub - lb > \epsilon$} {
    choose a node $(\mathbf{l}, \mathbf{u}) \in \mathcal{N}$\;
    choose a index $i$ where $l_i = 0, u_i = 1$\;
    \For{$val = 0, 1$}{
    $\text{newnode} = ((l_1, \dots, l_{i-1}, val, l_{i+1}, \dots, l_p), (u_1, \dots, u_{i-1}, val, u_{i+1}, \dots, u_p)); $ \\
    \uIf{Terminal(newnode)} {
        compute $\mathbf{v} = \argmax \mathbf{v}^{\intercal}A\mathbf{v}, \mathbf{v} \in \mathcal{V}(\mathbf{l_{new}, u_{new}}, k, B)$\;
        set $\text{upper bound} = \text{lower bound} = \mathbf{v}^{T}A\mathbf{v}$\;
        }
    \uElse{
        compute $\text{lower bound} = lower(newnode)$, with corresponding feasible point $\mathbf{v}$\;
        compute $\text{upper bound} = upper(newnode)$\;
        }
    \uIf{$\text{lower bound} > lb$}{
        set $lb = \text{lower bound}$\;
        set $\hat{\mathbf{v}} = \mathbf{v}$\;
        remove all nodes in $\mathcal{N}$ with $\text{upper bound} \leq lb$\;
        }
    \uIf{$\text{upper bound} > lb$}{
        add newnode to $\mathcal{N}$\;
        }
    }
    remove $(\mathbf{l, u})$ from $\mathcal{N}$\;
    update $ub$ to be the greatest value of upper bounds over $\mathcal{N}$\;
}
 
 \caption{Optimal SGEP}
 \label{algo_sgep}
\end{algorithm}
In step 3 of Algorithm \ref{algo_sgep}, we use a modified version of Truncated Rayleigh Flow Method (Rifle) proposed by \citet{tan2018sparse} to generate a good wart start, which is introduced in Section \ref{sec_upper_lower_bounds}. The heuristic methods to select nodes and dimensions in step 6 and 7 are discussed in Section \ref{node_selection_branch}. The convergence property of Optimal SGEP is established in Section \ref{convergence}.

\subsection{Upper and lower bounds}\label{sec_upper_lower_bounds}
Many bounds derived in this section require a projection step that takes any vector $\mathbf{v} \in \mathbb{R}^p$ as input and outputs the vector $\hat{\mathbf{v}}$ closest to $\mathbf{v}$ in the feasible set $\mathcal{V}(\mathbf{l, u}, k , B)$. That is $\hat{\mathbf{v}}$ is a solution of 
$$
\begin{aligned}
\min_{\hat{\mathbf{v}}} \quad & ||\mathbf{v} - \hat{\mathbf{v}}||_2^2 \\
\text{s.t. } &  \hat{\mathbf{v}} \in \mathcal{V}(\mathbf{l, u}, k , B)
\end{aligned}
\label{eq_trunc}
$$
The above optimization problem can be solved using Algorithm \ref{algo_trunc} which performs a simple truncation step on the original vector $\mathbf{v}$.

\begin{algorithm}[H]
\SetAlgoLined
\SetKwInOut{Input}{Input}
\SetKwInOut{Output}{Output} 
\Input{vector $\mathbf{v}$, cardinality $k$, lower and upper bounds $\mathbf{l, u}$ of the support, matrix $B$}
\Output{sparse vector $\hat{\mathbf{v}} \in \mathcal{V}(\mathbf{l, u}, k, B)$ that minimizes $||\mathbf{v - \hat{v}}||_2$ }
Compute the set $\mathcal{I}_1 = \{i \mid l_i = 1 \}$\;
Compute the set $\mathcal{I}_2$ by choosing the indices of the largest $k - |\mathcal{I}_1|$ absolute values of the vector $ \mathbf{v}_{\{i|l_i = 0, u_i = 1 \}}$\;
Compute $\hat{\mathbf{v}}$ by
$$
\hat{\mathbf{v}}_i = 
\begin{cases}
v_i & \text{if } i \in \mathcal{I}_1 \cup \mathcal{I}_2, \\
0 & \text{otherwise}.
\end{cases}
$$\\
Normalize 
$$\hat{\mathbf{v}} = \frac{\hat{\mathbf{v}}}{\hat{\mathbf{v}}^{T}B\hat{\mathbf{v}}}$$ \\
 \caption{$Truncate(\cdot, \mathbf{l, u}, k, B)$}
 \label{algo_trunc}
\end{algorithm}

\subsubsection{Upper bounds}
The first upper bound is derived based on the fact that if $\mathbf{v} \in \mathcal{V}(\mathbf{l, u}, k, B)$ then $\mathbf{v}$ is a feasible solution of 
$$
\begin{aligned}
\max_{\mathbf{v}} \quad & \mathbf{v}^{\intercal}A\mathbf{v} \\
\text{s.t. } \quad & \sum_{i=1}^p \lambda_i(\mathbf{s}_i^{\intercal}\mathbf{v})^2 = 1   \\
 & u_i = 0, \quad  \forall i \in \{i | u_i = 0 \}.
\end{aligned}
$$
The above optimization problem can reduce to a regular GEP:
$$
\begin{aligned}
\max_{\mathbf{v}} \quad & \mathbf{v}^{\intercal}A_{\mathbf{u}}\mathbf{v} \\
\text{s.t. } \quad & \mathbf{v}^{\intercal}B_{\mathbf{u}}\mathbf{v} = 1,
\end{aligned}
$$
where $A_{\mathbf{u}}, B_{\mathbf{u}}$ are matrices obtained by setting rows and columns of $A, B$ corresponding to $u_i = 0$ to zero. Thus, the first upper bound is 
\begin{equation}
    \text{upper bound 1} = \lambda_{\max}(A_\mathbf{u}, B_\mathbf{u}).
    \label{ub1}
\end{equation}

For the subproblem \eqref{sgep_pds}, we have $\mathbf{v}^{\intercal}A\mathbf{v} = \mathbf{v}^{\intercal}A_{\mathbf{w}}\mathbf{v}$ where $\mathbf{v} \in \mathcal{V}(\mathbf{l, u}, k, B)$, $\mathbf{w}$ is the support of $\mathbf{v}$. Since the trace of a semi-positive definite matrix is greater than or equal to the maximum eigenvalue, we have 
\begin{equation}
    \mathbf{v}^{\intercal}A_{\mathbf{w}}\mathbf{v} \leq \frac{\lambda_{\max}(A_\mathbf{w})}{\lambda_{\min}(B_\mathbf{w})} \leq \frac{tr(A_{\mathbf{w}})}{\lambda_{\min}(B_\mathbf{w})}.
    \label{ineq_trace}
\end{equation}

According the definition of $A_{\mathbf{w}}$, we obtain
$$
\begin{aligned}
& tr(A_{\mathbf{w}}) = \sum_{i=1}^pTruncate(diag(A), \mathbf{w, w}, k, B) \leq \sum_{i=1}^pTruncate(diag(A), \mathbf{l, u}, k, B), \\
& \lambda_{\min}(A_\mathbf{w}) = \min_{||\mathbf{v}||_2 = 1} \mathbf{v}^{\intercal}A_{\mathbf{w}}\mathbf{v} = \min_{||\mathbf{v}||_2 = 1, \text{supp}(\mathbf{v})=\mathbf{w}}\mathbf{v}^{\intercal}A\mathbf{v} \geq  \min_{||\mathbf{v}||_2 = 1}\mathbf{v}^{\intercal}A\mathbf{v} = \lambda_{\min}(A).
\end{aligned}
$$
Since $\mathbf{v}$ is arbitrary, the second upper bound is 
\begin{equation}
    \text{upper bound 2} = \frac{\sum_{i=1}^pTruncate(diag(A), \mathbf{l, u}, k, B)}{\lambda_{\min}(A)}.
    \label{ub2}
\end{equation}

According to \citet{scott1985accuracy}, Gershorin circle theorem provides an accurate bound of the largest eigenvalue of a real symmetric matrix. Using this theorem and the fact that $A_{ii} \geq 0$, we have
$$
\lambda_{\max}(A) \leq \max_{i}\sum_{j=1}^p|A_{ij}|.
$$
Then, utilize the above upper bound instead of the trace in the inequality \eqref{ineq_trace}, we obtain 
$$
\mathbf{v}^{\intercal}A\mathbf{v} = \mathbf{v}^{\intercal}A_{\mathbf{w}}\mathbf{v} \leq \frac{\max_{i}\sum_{j=1}^p|(A_{\mathbf{w}})_{ij}|}{\lambda_{\min}(B_{\mathbf{w}})}. 
$$
Taking the maximum value of both sides of the above inequality in the feasible set $\mathcal{V}(\mathbf{l, u}, k, B)$, the third upper bound can be derived as follows
$$
\begin{aligned}
\max_{\mathbf{v} \in \mathcal{V}(\mathbf{l, u}, k, B)} \mathbf{v}^{\intercal}A\mathbf{v} &\leq \frac{1}{\lambda_{\min}(B)} \max_{\mathbf{l} \leq \mathbf{w} \leq \mathbf{u}, \sum_{i}w_i=k}\left( \max_{i}\sum_{j=1}^p|(A_{\mathbf{w}})_{ij}| \right) \\
&= \frac{1}{\lambda_{\min}(B)} \max_{i}\left( \max_{\mathbf{l} \leq \mathbf{w} \leq \mathbf{u}, \sum_{i}w_i=k}\sum_{j=1}^p|(A_{\mathbf{w}})_{ij}| \right) \\
&= \frac{1}{\lambda_{\min}(B)} \max_{i, u_i=1}\left(\sum_{j=1}^p Truncate(|A_{i\cdot}|, \mathbf{l, u}, k, B)_j \right).
\end{aligned}
$$
Therefore, we obtain
\begin{equation}
    \text{upper bound 3} =\frac{1}{\lambda_{\min}(B)} \max_{i, u_i=1}\left(\sum_{j=1}^p Truncate(|A_{i\cdot}|, \mathbf{l, u}, k, B)_j \right).
    \label{ub3}
\end{equation}

Combining the three upper bounds \eqref{ub1}, \eqref{ub2}, \eqref{ub3}, the $upper$ function in \eqref{ub_lb_func} which compute a upper bound of a subproblem \eqref{sgep_pds} is defined as
\begin{equation}
    upper(\mathbf{l, u}, k, B) = \min (\text{upper bound 1}, \text{upper bound 2}, \text{upper bound 3}).
\end{equation}

\subsubsection{Lower bounds}

According to the definition of lower bounds in \eqref{ub_lb_func}, we only need to find a feasible vector $\mathbf{v}$ and a lower bound is given by $ \mathbf{v}^{\intercal}A\mathbf{v} $. The Rifle method proposed by \citet{tan2018sparse} can be exploited for a feasible vector. Note that the generalized Rayleigh quotient form of problem \eqref{sgep} is
\begin{equation}
    \max_{\mathbf{v}} \frac{\mathbf{v}^{\intercal}A\mathbf{v}}{\mathbf{v}^{\intercal}B\mathbf{v}}, \quad \text{s.t. } ||\mathbf{v}||_0 \leq k.
    \label{rayleigh}
\end{equation}
Rifle first computes the gradient of the objective function in \eqref{rayleigh}. Then the current vector is updated by its ascent direction, which ensures the output $\mathbf{v}_t$ is not worse than the initial vector. Next, Rifle truncates the updated vector to obtain sparsity. To adopt Rifle, we made two modifications to it.
Firstly, in the original algorithm, truncating a vector keeps elements with the largest $k$ absolute values and sets other elements to zero. Now, we enforce the truncated vector must contain entries where $l_i=1$ and not include entries where $u_i=0$, that is, we use our truncation Algorithm \ref{algo_trunc}. Secondly, the while loop are repeated $N_1$ times, where $N_1$ is a small number, not until convergence. The reason is explained in Section \ref{node_selection_branch}.

\begin{algorithm}[H]
 \label{algo_rifle}
\SetAlgoLined
\SetKwInOut{Input}{Input}
\SetKwInOut{Output}{Output} 
\Input{matrices $A, B$, cardinality $k$, step size $\eta$, initial vector $\mathbf{v}_0$, maximum number of iterations $N_1$}
\Output{$\mathbf{v}_t \in \mathcal{V}(\mathbf{l, u}, k, B)$}
$t=1$\;
 \While{$t \leq N_1$}{
    $\rho_{t-1} = \mathbf{v}_{t-1}^{\intercal}A\mathbf{v}_{t-1}/\mathbf{v}_{t-1}^{\intercal}B\mathbf{v}_{t-1}$ \;
    $C = I + (\eta/\rho_{t-1})\cdot (A - \rho_{t-1}B)$\;
    $\mathbf{v}_t^{'} = C\mathbf{v}_{t-1} / ||C\mathbf{v}_{t-1}||_2$\;
    $\mathbf{v}_t = Truncate(\mathbf{v}_t^{'}, \mathbf{l, u}, k, B)$\;
    $t = t + 1$\;
 }

 \caption{Customized Rifle}
\end{algorithm}

The Rifle requires that the initial vector $\mathbf{v}_0$ is close to the optimal solution of problem \eqref{sgep}. \citet{tan2018sparse} considered a convex formulation of \eqref{sgep} 
$$
\min_{P \in \mathbb{S}_+^{p}} -tr(AP) + \zeta ||P||_{1}, \quad \text{s.t. } ||B^{1/2}PB^{1/2}||_* \leq 1 \text{ and } ||B^{1/2}PB^{1/2}||_2 \leq 1,
$$
where $P$ represents the orthogonal projection onto the subspace spanned by the largest eigenvector. The nuclear norm forces the solution to have low rank and the spectral norm constrains the largest eigenvalue of the solution.
They solved it using linearized alternating direction method of multipliers (ADMM) (\citet{fang2015generalized}). In each iteration, the algorithm needs to perform a full singular value decomposition (SVD) of a $p \times p$ matrix, which is computationally expensive. It is natural that with more iterations, we can obtain a better initial vector. However, in Section \ref{node_selection_branch}, we'll show our branch and bound algorithm has little benefit from increasing the number of iterations.

\begin{algorithm}[H]
 \label{algo_LADMM}
\SetAlgoLined
\SetKwInOut{Input}{Input}
\SetKwInOut{Output}{Output} 
\Input{matrices $A, B$, tuning parameter $\zeta$, ADMM parameter $\nu$, maximum number of iterations $N_2$}
\Output{$P_t$}
Initialize with $P_0, H_0, \Gamma_0$\;
t = 1\;
 \While{$t \leq N_2$}{
     Update $P$ by solving the following LASSO problem:
    $$
    P_{t+1} = \argmin_{P}\frac{\nu}{2}||B^{1/2}PB^{1/2} - H_t + \Gamma_t||_F^2 - tr(AP) + \zeta ||P||_{1, 1};
    $$\\
    
    Let $\sum_{j=1}^p\omega_j\textbf{a}_j\textbf{a}_j^{\intercal}$ be the singular value decomposition of $\Gamma_t + B^{1/2}P_{t+1}B^{1/2}$ and let 
    $$
    \gamma^* = \argmin_{\gamma > 0}\gamma \quad 
    \text{s.t. } \sum_{j = 1}^p\min\{ 1, \max(\omega_j - \gamma, 0) \} \leq K;
    $$\\
    
    Update $H$ by 
    $$
    H_{t+1} = \sum_{j=1}^p\min\{ 1, \max(\omega_j - \gamma^*, 0) \}\mathbf{a}_j\mathbf{a}_j^{\intercal};
    $$\\
    
    Update $\Gamma$ by
    $$
    \Gamma_{t+1} = \Gamma_t + B^{1/2}P_{t+1}B^{1/2} - H_{t+1};
    $$\\
    t = t + 1\;
 }

 \caption{Linearized ADMM with early stop}
\end{algorithm}

Let $P_t$ be the output of Algorithm \ref{algo_LADMM}. Then we set the initial vector for Algorithm \ref{algo_rifle} to be the eigenvector corresponding to the largest eigenvalue. Tuning parameter $\zeta$ and ADMM parameter $\nu$ are set to be $\sqrt{\log (p) / n}$ and 1 respectively as suggested by \citet{tan2018sparse}. Let $\mathbf{v}_{rifle}$ be the output of Algorithm \ref{algo_rifle}. The first lower bound is 
\begin{equation}
     \text{lower bound 1} = \mathbf{v}_{rifle}^{\intercal}A\mathbf{v}_{rifle}
    \label{lb1}
\end{equation}

Another way of obtaining a feasible vector is truncating the eigenvector $\mathbf{v}_{\max}$ corresponding to the largest eigenvalue $\lambda_{\max}(A, B)$, that is 
\begin{equation}
    \text{lower bound 2} = (Truncate(\mathbf{v}_{\max}, \mathbf{l, u}, B))^{\intercal}A(Truncate(\mathbf{v}_{\max}, \mathbf{l, u}, B))
    \label{lb2}
\end{equation}

Combining the two lower bounds \eqref{lb1}, \eqref{lb2}, the $lower$ function in \eqref{ub_lb_func} which computes a lower bound of a subproblem \eqref{sgep_pds} is defined as
\begin{equation}
    lower(\mathbf{l, u}, k, B) = \max (\text{lower bound 1}, \text{lower bound 2}).
\end{equation}

\subsection{Node selection, branching and de-emphasizing lower bounds} \label{node_selection_branch}
In this section, we study some technical details of Optimal SGEP. First, we compare two strategies to select nodes: depth-first and best-first. Then we discuss several variable branching heuristics which may greatly affect the total number of explored nodes. Finally, we consider de-emphasizing lower bounds which is accomplished by limiting the number of iterations of Algorithm \ref{algo_rifle} and \ref{algo_LADMM}.  

We conducted numerical experiments to compare differences between various methods and parameter settings. We used model 3 in Section \ref{sec_desc_models} with $n = 200, p = 80$ to generate data and applied sliced inverse regression with 5 slices to obtain matrices $A$ and $B$. Then we ran Algorithm \ref{algo_sgep} on this problem with $k = 3$ one hundred times and documented averaged results. We report three important measures of the performance of Optimal SGEP: the time to obtain a feasible solution that will later be verified as a optimal solution (time to lower bound), the time for the algorithm to converge (time to upper bound), and the number of explored nodes. 

\subsubsection{Node selection}
The node selection strategies decides how we explore the enumeration tree. The two heuristics we consider are depth-first and best-first (\citet{lodi2017learning}). In depth-first, we select the most recently added node until it is terminal. Depth-first keeps the number of unexplored nodes small, but causes slow reduction of the bound gap. In best-fist, we choose the node with highest upper bound, which guarantees the total upper bound declines at each iteration. But algorithms using best-first need to maintain a large set of unexplored nodes since it continuously replace one node with its two child nodes. Although the two methods have different number of nodes need to maintain in the running process, they usually result in the same number of total explored nodes. 

The properties of depth-first and best-first motivate a two-phase method, which combines the two methods to balance the search objectives. In Algorithm \ref{algo_node_select}, we create a parameter called \emph{MaxDepth} which controls the maximum number of active nodes. If the number of nodes is smaller than MaxDepth, best-first is applied to quickly decrease the overall upper bound. Once the number exceeds MaxDepth, we employ depth-first to reduce the size of the set of active nodes until it falls below MaxDepth. 

As shown in Table \ref{table_node_select}, the drastic changes of MaxDepth lead to almost the same performance. However, we do not recommend setting MaxDepth too small since it causes slow reduction progress of overall upper bounds until the algorithm is close to convergence. The priority of depth-first and best-first depends on the specific problem. Therefore, we suggest dynamically adjusting this parameter according to different purposes.

\begin{algorithm}[H]
 \label{algo_node_select}
\SetAlgoLined
\SetKwInOut{Input}{Input}
\SetKwInOut{Output}{Output} 
\Input{MaxDepth, node set $\mathcal{N}$}
\Output{selected node}
$\text{n} = |\mathcal{N}|$\;
\eIf{$n < MaxDepth$}{
    set $\text{node} = \argmax_{\text{node} \in \mathcal{N}}upper(\text{node})$\;
}{
    set $\text{node} = \text{the most recently added node}$\;
}
 \caption{Node Selection}
\end{algorithm}

\begin{table}[!htbp] 
\centering
 \caption{Effects of increasing the maximum number of active nodes} 
 \label{table_node_select}
 \resizebox{\textwidth}{!}{
 \begin{threeparttable}  
 \begin{tabular}{cccc} 
  \toprule 
  MaxDepth  & Time(s) to lower bound & Time(s) to convergence & Number of explored nodes \\ 
  \midrule 
    10 &  0.77278 & 4.44018 & 342.96 \\
    20 &  0.70350 & 4.42414 & 342.85 \\
    40 &  0.70385 & 4.42484 & 342.85 \\
    80 &  0.70306 & 4.42404 & 342.85 \\
   200 &  0.70333 & 4.42729 & 342.85 \\
   400 &  0.70369 & 4.43190 & 342.85 \\
   800 &  0.70376 & 4.42211 & 342.85 \\
  1500 &  0.70511 & 4.42648 & 342.85 \\
  3000 &  0.70342 & 4.42644 & 342.85 \\
  6000 &  0.70205 & 4.42117 & 342.85 \\
 10000 &  0.70301 & 4.42290 & 342.85 \\
 20000 &  0.70371 & 4.42135 & 342.85 \\
  \bottomrule 
 \end{tabular} 
 \begin{tablenotes}    
        \footnotesize               
        \item[1] The worst performance occurs at $\text{MaxDepth} = 10$ and results under other settings have minor differences. 
      \end{tablenotes}            
\end{threeparttable} }
\end{table}

\subsubsection{Branching}
After selecting a node, we need to appropriately choose a variable to branch on, which corresponds to step 7 in Algorithm \ref{algo_sgep}. A good branching strategy can significantly reduce the number of explored nodes.

The first and most naive approach is randomly choosing a variable among the set $\mathcal{I} = \{ i | l_i=0, u_i=1 \}$. This method doesn't use any information from matrices $A$ and $B$. Therefore, it generally makes little progress at each step, which makes the algorithm extremely inefficient. The second method chooses a dimension $i$  that solves the problem
$$
\max_{i \in \mathcal{I}}\frac{\mathbf{e}_i^{\intercal}A\mathbf{e}_i}{\mathbf{e}_i^{\intercal}B\mathbf{e}_i}, 
$$
where $\mathbf{e}_i$ is the $i$-th standard basis. This method performs well in most cases, but fails when the spread of values $A_{ii}/B_{ii}$ is small. Another method computes $i$ by setting it to be the index of the largest absolute loading of the eigenvector corresponding to $\lambda_{\max}(A_{\mathcal{I}}, B_{\mathcal{I}})$. The drawback to it is the time consuming process of computing the eigenvector. The last heuristic selects $i = \argmax_{i \in \mathcal{I}}|(\mathbf{v}_{\max})_i|$, where $\mathbf{v}_{\max}$ is the largest eigenvector of $A$ and $B$.

The results in Table \ref{table_dim_select} show that selecting dimension according to $|(\mathbf{v}_{\max})_i|$ outperforms the others. Actually, all three methods except random selection have similar performance. Thus, in practice, we suggest setting the fourth method as default and the second and third methods as alternatives.

\begin{table}[!htbp] 
\centering
 \caption{ Comparison of four dimension selection methods}
\label{table_dim_select}
 \resizebox{\textwidth}{!}{
 \begin{threeparttable}  
 \begin{tabular}{cccc} 
  \toprule 
  Method  & Time(s) to lower bound & Time(s) to convergence & Number of explored nodes \\ 
  \midrule 
    random            &  93.279  &  137.815  &  13868.6 \\
    $A_{ii} / B_{ii}$ &  0.68025 &  5.88354  &  494.79 \\
    $|\mathbf{v}_{\max}(A_{\mathcal{I}}, B_{\mathcal{I}})_i|$ &  0.69008 &   4.97604  &  400.73 \\
   $|(\mathbf{v}_{\max})_i|$&  0.70652  &  4.38796  &  342.85 \\
  \bottomrule 
 \end{tabular} 
 \begin{tablenotes}    
        \footnotesize               
        \item[1] The worst performance occurs when using random selection and the fourth method outperforms others.
      \end{tablenotes}            
\end{threeparttable} }

\end{table}

\subsubsection{De-emphasizing lower bound}
The goal of using Rifle is to obtain a lower bound at a node. The parameters $N_1$ and $N_2$ which control the number of iterations of Algorithm \ref{algo_rifle} and \ref{algo_LADMM} deeply affect the quality of obtained lower bounds. Clearly, taking more iterations can improve the quality of lower bounds. However, it will increase the work per node at the same time, which may make the overall performance of Optimal SGEP worse. 

We see from Table \ref{table_effects_algo_rifle} that as $N_1$ increases, the time to lower bounds may decrease when $N_1$ takes some values, but the time for the algorithm to converge increase continually. Since each iteration of Algorithm \ref{algo_LADMM} requires a full singular value decomposition, spending more time on obtaining a good initial vector can seriously hinder the performance of the branch and bound algorithm as shown in Table \ref{table_effects_LADMM}.

Therefore, experiment studies suggest we should choose smaller $N_1$ and $N_2$. In practice, two steps of Customized Rifle combining with one step of ADMM can achieve a good lower bound.

\begin{table}[!htbp] 
\centering
 \caption{ Effects of increasing the number of iterations of Algorithm \ref{algo_rifle} } 
 \label{table_effects_algo_rifle}
 \resizebox{\textwidth}{!}{
 \begin{threeparttable}  
 \begin{tabular}{cccc} 
  \toprule 
  $N_1$ & Time(s) to lower bound & Time(s) to convergence & Number of explored nodes \\ 
  \midrule 
  1 & 0.70368 & 4.42460 & 342.85 \\
  2 & 0.70727 & 4.46465 & 342.85 \\
  3 & 0.70721 & 4.49915 & 342.85 \\
  4 & 0.70800 & 4.54188 & 342.85 \\
  5 & 0.70849 & 4.57578 & 342.85 \\
  6 & 0.71090 & 4.61840 & 342.85 \\
  7 & 0.71265 & 4.66210 & 342.85 \\
  8 & 0.71463 & 4.70206 & 342.85 \\
  9 & 0.71448 & 4.74727 & 342.85 \\
 10 & 0.71176 & 4.78149 & 342.85 \\
 20 & 0.71552 & 4.83138 & 342.85 \\
 30 & 0.71798 & 4.87349 & 342.85 \\
 40 & 0.71893 & 4.92148 & 342.85 \\
 50 & 0.71871 & 4.96769 & 342.85 \\
  \bottomrule 
 \end{tabular} 
 \begin{tablenotes}    
        \footnotesize               
        \item[1] Branch and bound algorithm benefits a little from increasing the number of iterations of obtaining lower bounds.     
      \end{tablenotes}            
\end{threeparttable} }
\end{table}

\begin{table}[!htbp] 
\centering
 \caption{ Effects of increasing the number of iterations of Algorithm \ref{algo_LADMM} } 
 \label{table_effects_LADMM}
 \resizebox{\textwidth}{!}{
 \begin{threeparttable}  
 \begin{tabular}{cccc} 
  \toprule 
  $N_2$ & Time(s) to lower bound & Time(s) to convergence & Number of explored nodes \\ 
  \midrule 
  1 & 0.71643 &  4.47632 & 342.85 \\
  2 & 0.75290 &  6.68824 & 342.85 \\
  3 & 0.79193 &  8.85753 & 342.85 \\
  4 & 0.83765 & 11.0351  & 342.85 \\
  5 & 0.88091 & 13.1915  & 342.85 \\
  6 & 0.92437 & 15.3506  & 342.85 \\
  7 & 0.96785 & 17.4755  & 342.85 \\
  8 & 1.01115 & 19.6039  & 342.85 \\
  9 & 1.04986 & 21.7109  & 342.85 \\
 10 & 1.09356 & 23.8220  & 342.85 \\
 20 & 1.13682 & 25.9278  & 342.85 \\
 30 & 1.17803 & 28.0471  & 342.85 \\
 40 & 1.21925 & 30.1605  & 342.85 \\
 50 & 1.26143 & 32.2851  & 342.85 \\
  \bottomrule 
 \end{tabular} 
  \begin{tablenotes}    
        \footnotesize               
        \item[1] More time spent computing the initial vector results in a drastic increase in run time.    
      \end{tablenotes}            
\end{threeparttable} }
\end{table}

\subsection{Convergence} \label{convergence}
In this section, we present the convergence property of our algorithm. The proof is straightforward and a result of the general branch and bound algorithm.

\begin{definition}
An $\epsilon$-optimal solution of problem \eqref{sgep_mio} is a vector $\hat{\mathbf{v}}$ that satisfies
$$
\hat{\mathbf{v}}^{\intercal}A\hat{\mathbf{v}} \geq \mathbf{v}^{\intercal}A\mathbf{v} - \epsilon, \quad \forall \mathbf{v} \in \mathcal{V}(\{ 0 \}^p, \{ 1 \}^p, k, B).
$$
\end{definition}

\begin{theorem}\label{theo_conv}
The algorithm, Optimal SGEP, produces an $\epsilon$-optimal solution of the sparse generalized eigenvalue problem \eqref{sgep_mio} within finite iterations.
\end{theorem}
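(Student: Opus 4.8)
The plan is to treat Theorem \ref{theo_conv} as a standard finiteness-plus-correctness argument for branch and bound, resting on two facts already in place: the bounding functions of Section \ref{sec_upper_lower_bounds} are valid (each $upper(\mathbf{l},\mathbf{u},k,B)$ dominates the subproblem optimum, and each $lower(\mathbf{l},\mathbf{u},k,B)$ is attained at a genuinely feasible point), and branching on a free coordinate splits a node's feasible region exactly into the two children's regions. Writing $f(\mathbf{v})=\mathbf{v}^{\intercal}A\mathbf{v}$ and $\mathrm{OPT}=\max_{\mathbf{v}\in\mathcal{V}(\{0\}^p,\{1\}^p,k,B)}f(\mathbf{v})$ for the global optimum of \eqref{sgep_mio}, I must establish two things: the while loop halts after finitely many passes, and at the halt the incumbent $\hat{\mathbf{v}}$ with value $lb$ obeys $\mathrm{OPT}-lb\leq\epsilon$.

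For finiteness I would first note that branching in Algorithm \ref{algo_sgep} always acts on an index $i$ with $l_i=0,\,u_i=1$ and fixes it to $l_i=u_i=0$ or $l_i=u_i=1$ in the two children, so the number of free coordinates strictly decreases along every root-to-leaf path. Hence the enumeration tree has depth at most $p$ and at most $2^{p+1}-1$ nodes, each identified by a distinct pair $(\mathbf{l},\mathbf{u})$ and generated exactly once as the child of its unique parent. Consequently each node enters $\mathcal{N}$ at most once and leaves at most once, either by being selected and branched (removed at the penultimate line) or by being pruned when its upper bound falls to $lb$; the number of iterations is therefore bounded by the node count and is finite.

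For correctness I would carry the invariant that, at the end of every iteration, every feasible $\mathbf{v}$ with $f(\mathbf{v})>lb$ lies in $\mathcal{V}(\text{node})$ for some node in $\mathcal{N}$; since $upper$ is a valid bound this yields $\mathrm{OPT}\leq\max\bigl(lb,\,\max_{\text{node}\in\mathcal{N}}upper(\text{node})\bigr)=\max(lb,ub)$. It holds at initialization, where $\mathcal{N}$ is the root whose feasible set is all of $\mathcal{V}(\{0\}^p,\{1\}^p,k,B)$ and $ub=\lambda_{\max}(A,B)\geq\mathrm{OPT}$. To see it is preserved across one pass I would track an arbitrary feasible $\mathbf{v}$ whose value exceeds the final incumbent: if the node covering $\mathbf{v}$ is not the one just branched, then validity of $upper$ keeps that node's bound strictly above $lb$ at every pruning step, so it survives; if it is the branched node, then since any feasible point has support with $w_i\in\{0,1\}$ the split property places $\mathbf{v}$ in exactly one child, whose upper bound exceeds $lb$ and which is therefore added to $\mathcal{N}$ and never subsequently pruned. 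Terminal children fit the same scheme, as there $upper=lower=f(\mathbf{v})$ is the exact subproblem optimum, and $lb$ only ever moves to values $f(\hat{\mathbf{v}})$ of genuine feasible points, whence $lb\leq\mathrm{OPT}$ throughout.

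The conclusion then follows mechanically: at termination $ub-lb\leq\epsilon$, so $\mathrm{OPT}\leq\max(lb,ub)$ gives $\mathrm{OPT}-lb\leq\epsilon$ whether $ub\leq lb$ or $ub>lb$, and since $\hat{\mathbf{v}}$ is feasible with $f(\hat{\mathbf{v}})=lb$ it is $\epsilon$-optimal. The part demanding the most care is not any single estimate but the bookkeeping of the invariant through the intra-iteration ordering of incumbent update, pruning, and child insertion across the two values $val=0,1$ (one must confirm that a child added while processing $val=0$ cannot be wrongly pruned when $val=1$ raises $lb$), together with the degenerate case $\mathcal{N}=\emptyset$, where I would read $ub=-\infty$ so that the loop exits and the incumbent is in fact exactly optimal.
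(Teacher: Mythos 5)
Your proof is correct and follows essentially the same route as the paper's own (very terse) argument: finiteness from the bound of $2^{p+1}-1$ nodes in the enumeration tree, soundness because pruning only discards nodes whose feasible sets contain no solution better than the incumbent, and $\epsilon$-optimality from the termination condition $ub - lb \leq \epsilon$. You have simply made explicit, via the covering invariant and the intra-iteration bookkeeping, what the paper dismisses as "straightforward and a result of the general branch and bound algorithm."
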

\begin{proof}
The algorithm seeks a node in the enumeration tree, so the number of iterations is bounded by $2^{p+1}-1$, the total number of nodes in the tree. Nodes are pruned only if their feasible sets are empty, or they don't have feasible solutions that are better than the current best solution. Therefore, we will not exclude any optimal solution before termination. Once the gap between overall lower and upper bounds falls below $\epsilon$, we obtain a $\epsilon$-optimal solution of problem \eqref{sgep_mio}. 
\end{proof}

Theorem \ref{theo_conv} shows that Optimal SGEP is an exact method for solving non-convex sparse generalized eigenvalue problems. In particular, we can compute the globally optimal solution of SGEP by setting $\epsilon$ to 0.

\subsection{Tuning parameter selection}\label{sec_tuning}
Algorithm \ref{algo_sgep} involves one tuning parameter $k$, which controls the sparsity of the estimated central subspace. We recommend using the following criterion to choose $k$
\begin{equation}
    \sum_{i=1}^p ||B^{-1}\sqrt{A}_{\cdot j} - \hat{\mathbf{v}}_k\hat{\mathbf{v}}_k^{\intercal}\sqrt{A}_{\cdot j}||_B^2 + \gamma \cdot df,
    \label{eq_aic_bic}
\end{equation}
where $\hat{\mathbf{v}}_k$ denotes the output of Algorithm \ref{algo_sgep} given $k$. Following the suggestion of \citet{chen2010coordinate}, we set $\gamma = 2 / n$ and $\gamma = \log(n) / n$ for AIC-type (\cite{akaike1998information}) and BIC-type (\citet{schwarz1978estimating}) criteria respectively. $\gamma$ represents the effective number of parameters and is estimated by the number of nonzero elements of $\hat{\mathbf{v}}_k$. According the discussion of \citet{li2007sparse}, we select the $k$ that minimize equation \eqref{eq_aic_bic}.

\section{Simulation Studies}	 \label{simulation_section}
In this section, we first compare our branch and bound algorithm to the state-of-art Rifle method (\citet{tan2018sparse}) in terms of solution qualities. Then, we show the proposed algorithm with early stop has a good performance on large datasets.

All experiments in this section were performed on an eight-core 3.10 GHz processor (Intel i9-9900) with 16GB RAM. Our branch and bound method was implemented in Julia while the Rifle method was run with the authors' published R package (\citet{rifle_Rpackage}).

\subsection{Description of Models}\label{sec_desc_models}
Datasets used in Section \ref{comp_qualities_subsec} and \ref{comp_time_subsec} were generated from the following four models. For each model, we sampled $\mathbf{x}$ from a multivariate normal distribution $\mathcal{N}(0, \Sigma)$, where $\Sigma_{ij} = 0.5^{|i-j|}$ for $1 \leq i, j \leq p$, $\epsilon$ from the standard normal distribution and $\mathbf{x}$ and $\epsilon$ are independent.
\begin{enumerate}
    \item A linear regression model with large signal-to-noise ratio:
    $$
    y = x_1 + x_2 + x_3 + 0.5 \epsilon. 
    $$
    In this model, the central subspace is spanned by $\boldsymbol{\beta}_1 = (1, 1, 1, 0, \dots, 0)^{\intercal}$ with $p-3$ zeros. 
    \item A linear regression model with small signal-to-noise ratio:
    $$
    y = x_1 + x_2 + x_3 + 2 \epsilon.
    $$
    Same as the first model, the central subspace is spanned by $\boldsymbol{\beta}_1 = (1, 1, 1, 0, \dots, 0)^{\intercal}$ with $p-3$ zeros. But the error was four times the first model. 
    \item A non-linear regression model:
    $$
    y = 1 + \exp\{ (x_1 + x_2 + x_3 / \sqrt{3} \} + \epsilon.
    $$
    In this model, the central subspace is spanned by $\boldsymbol{\beta}_1 = (1, 1, 1, 0, \dots, 0)^{\intercal}$ with $p-3$ zeros.
    \item A non-linear regression model:
    $$
    y = sign(x_1 + x_2 + x_3 + x_4) \cdot \log (|x_{p-3} + x_{p-2} + x_{p-1} + 5|) + 0.1\epsilon,
    $$
    where $sign$ is a function that extracts the sign of a real number. In this model, the central subspace is spanned by the directions $\boldsymbol{\beta}_1 = (1, 1, 1, 1, 0, \dots, 0)^{\intercal}$ and $\boldsymbol{\beta}_2 = (0, \dots, 0, 1, 1, 1, 0)^{\intercal}$.
\end{enumerate}

\subsection{Comparison of solution qualities} \label{comp_qualities_subsec}
To evaluate solution qualities of different algorithms, we'll use three summary statistics$-$TPR, FPR, and $\Delta$. The true positive rate TPR is defined as the ratio of the number of correctly identified relevant variables to the number of truly relevant variables, and the false positive rate FPR is defined as the ratio of the number of falsely identified relevant variables to the number of truly irrelevant variables. $\Delta$ represents the distance between the estimated central subspace $\hat{\boldsymbol{\beta}}$ and the true central subspace $\boldsymbol{\beta}$ and is defined as 
\begin{equation}
    \Delta(\boldsymbol{\beta}, \hat{\boldsymbol{\beta}}) = ||P_{\boldsymbol{\beta}} - P_{\hat{\boldsymbol{\beta}}}||_F,
    \label{eq_distance_subspaces}
\end{equation}
where $P_{\boldsymbol{\beta}}, P_{\hat{\boldsymbol{\beta}}}$ are projection matrices onto subspaces $\boldsymbol{\beta}$ and $\hat{\boldsymbol{\beta}}$, respectively. 

We ran simulations on all four models with problem sizes $(n, p) = (150, 50)$ and $(n, p) = (300, 80)$. We used SIR with five slices and PFC with $f(y) = (|y|, y^2, y^3)^{\intercal}$ to generate matrices $A$ and $B$ for Optimal SGEP and Rifle method. For these methods, we denoted BB-SIR, BB-PFC, Rifle-SIR, and Rifle-PFC. The BIC criterion was used to select the tuning parameter $k$ for all methods. To assess the quality of estimated central subspaces, we computed means (standard errors) of true positive rate TPR, false positive rate FPR, and distance $\Delta$ over 100 datasets. 

The simulation results of four models are summarized in Tables \ref{table_simulation_model_1}-\ref{table_simulation_model_4}, respectively. 
These tables show that the proposed branch and bound algorithm outperforms Rifle in terms of TPR and $\Delta$ under all settings. Both methods have extremely small FPR, but Rifle performs slightly better in some situation. It should be noticed that the superiority of BB becomes more significant when the ratio of $n$ to $p$ get larger. In particular, BB correctly identified all active and inactive predictors of model 1 and 3 when $n = 300, p = 80$.

\begin{table}[!htbp]	
	\centering
	\caption{Summary of Model 1}
	\label{table_simulation_model_1}
	\resizebox{\textwidth}{!}{
		\begin{threeparttable}		
			\begin{tabular}{cccccc}
				\toprule
				\multirow{2}{*}{Problem Size}&\multirow{2}{*}{Measure}&\multicolumn{4}{c}{  Method }\cr			
				\cmidrule(lr){3-6}
				& & BB-SIR & BB-PFC & Rifle-SIR & Rifle-PFC \cr
				\midrule
$n=150, p=50$ & TPR & 0.997(0.03) & 0.987(0.07) & 0.673(0.05) & 0.670(0.03) \cr
              & FPR & 0.000(0.00) & 0.001(0.00) &0.000(0.00)&0.000(0.00)\cr	
&$\Delta$& 0.113(0.09) & 0.245(0.17) & 0.813(0.09) & 0.822(0.05) \cr

$n=300, p=80$ & TPR & 1.000(0.00) & 1.000(0.00) & 0.667(0.00) & 0.667(0.00) \cr
       & FPR & 0.000(0.00) & 0.000(0.00) & 0.000(0.00) & 0.000(0.00)\cr	
&$\Delta$& 0.081(0.04) & 0.166(0.09) & 0.823(0.01) & 0.825(0.01) \cr
				\bottomrule
			\end{tabular}	
	\end{threeparttable}}
\end{table}

\begin{table}[!htbp]	
	\centering
	\caption{Summary of Model 2}
	\label{table_simulation_model_2}
	\resizebox{\textwidth}{!}{
		\begin{threeparttable}		
			\begin{tabular}{cccccc}
				\toprule
				\multirow{2}{*}{Sample Size}&\multirow{2}{*}{Measure}&\multicolumn{4}{c}{  Method }\cr			
				\cmidrule(lr){3-6}
				& & BB-SIR & BB-PFC & Rifle-SIR & Rifle-PFC \cr
				\midrule
$n=150, p=50$ & TPR & 0.870(0.16) & 0.790(0.18) & 0.663(0.07) & 0.657(0.06) \cr
       & FPR & 0.000(0.00) & 0.013(0.01) & 0.001(0.01) & 0.002(0.01)\cr	
&$\Delta$& 0.440(0.33) & 0.673(0.34) & 0.834(0.08) & 0.874(0.08) \cr

$n=300, p=80$ & TPR & 0.997(0.03) & 0.947(0.12) & 0.667(0.00) & 0.667(0.00) \cr
       & FPR & 0.000(0.00) & 0.004(0.01) & 0.000(0.00) & 0.000(0.00)\cr	
&$\Delta$& 0.190(0.12) & 0.396(0.26) & 0.828(0.01) & 0.851(0.05) \cr
				\bottomrule
			\end{tabular}	
	\end{threeparttable}}
\end{table}

\begin{table}[!htbp]	
	\centering
	\caption{Summary of Model 3}
	\label{table_simulation_model_3}
	\resizebox{\textwidth}{!}{
		\begin{threeparttable}		
			\begin{tabular}{cccccc}
				\toprule
				\multirow{2}{*}{Sample Size}&\multirow{2}{*}{Measure}&\multicolumn{4}{c}{  Method }\cr			
				\cmidrule(lr){3-6}
				& & BB-SIR & BB-PFC & Rifle-SIR & Rifle-PFC \cr
				\midrule
$n=150, p=50$ & TPR & 0.937(0.13) & 0.997(0.03) & 0.670(0.03) & 0.670(0.03) \cr
       & FPR & 0.000(0.00) & 0.000(0.00) & 0.000(0.00) & 0.000(0.00)\cr	
&$\Delta$& 0.320(0.27) & 0.213(0.13) & 0.825(0.08) & 0.823(0.07) \cr

$n=300, p=80$ & TPR & 1.000(0.00) & 1.000(0.00) & 0.667(0.00) & 0.667(0.00) \cr
       & FPR & 0.000(0.00) & 0.000(0.00) & 0.000(0.00) & 0.000(0.00)\cr	
&$\Delta$& 0.173(0.09) & 0.149(0.08) & 0.826(0.01) & 0.826(0.01) \cr
				\bottomrule
			\end{tabular}	
	\end{threeparttable}}
\end{table}

\begin{table}[!htbp]	
	\centering
	\caption{Summary of Model 4}
	\label{table_simulation_model_4}
	\resizebox{\textwidth}{!}{
		\begin{threeparttable}		
			\begin{tabular}{cccccc}
				\toprule
				\multirow{2}{*}{Sample Size}&\multirow{2}{*}{Measure}&\multicolumn{4}{c}{  Method }\cr			
				\cmidrule(lr){3-6}
				& & BB-SIR & BB-PFC & Rifle-SIR & Rifle-PFC \cr
				\midrule
$n=150, p=50$ & TPR & 0.831(0.07) & 0.779(0.08) & 0.724(0.04) & 0.715(0.05) \cr
       & FPR & 0.005(0.01) & 0.011(0.01) & 0.004(0.01) & 0.004(0.01)\cr	
&$\Delta$& 0.839(0.17) & 0.933(0.19) & 1.050(0.06) & 1.062(0.09) \cr

$n=300, p=80$ & TPR & 0.937(0.07) & 0.876(0.08) & 0.719(0.02) & 0.714(0.03) \cr
       & FPR & 0.000(0.00) & 0.003(0.01) & 0.000(0.00) & 0.000(0.00)\cr	
&$\Delta$& 0.480(0.28) & 0.680(0.27) & 1.019(0.05) & 1.035(0.06) \cr

				\bottomrule
			\end{tabular}	
	\end{threeparttable}}
\end{table}

\subsection{Scaling Optimal-SGEP to large datasets} \label{comp_time_subsec}
To test the performance of branch and bound algorithm on large datasets, we generated samples of various sizes using model 1, 2 and 4. Since the objective of these experiments is to show Optimal-SGEP is scalable to big datasets, we assume that $k$ is selected correctly. If problem sizes are too large such that Optimal-SGEP cannot converge within $120s$, we compute TPR, FPR and $\Delta$ to assess the quality of solutions obtained using two time caps: 120s and 300s if $p < 1000$; 180s and 360s if $p = 1000$. If the algorithm converges within 120s, in addition to calculating TPR, FPR and $\Delta$, we also record time to lower bound and time to upper bound.

The results averaged over 20 datasets are reported in Table \ref{table_scaling_converge} and \ref{table_scaling_not_converge}. From Table \ref{table_scaling_converge}, we can observe that for problems with simple structures, such as model 1, Optimal-SGEP quickly attains the optimal solution and proves it is optimal. If the problems are complicated, such as model 3 and model 4, Optimal-SGEP can still quickly identify the optimal solution as a feasible solution, although it spends increasingly more time bridging the gap between lower and upper bounds as $p$ getting larger. This property suggests we can adopt an early stop strategy when applying Optimal-SGEP to large datasets. Table \ref{table_scaling_not_converge} shows the performance of Optimal-SGEP with different time caps on large datasets. It is notable that our algorithm can provide high-quality solutions for problems with $p=1000s$ in minutes. We also found doubling the running time of Optimal-SGEP hardly improves the quality of solutions.

Therefore, we believe these experiments prove that Optimal-SGEP is appropriate for both small and large problems and gives reliable solution within minutes.

\begin{table}[!htbp]	
	\centering
	\caption{Time to lower bound (TimeLB) and time to upper bound (TimeUB) are reported if the algorithm converges in 120s.}
	\label{table_scaling_converge}
	\resizebox{\textwidth}{!}{
		\begin{threeparttable}		
			\begin{tabular}{ccccccc}
				\toprule
				\multirow{2}{*}{Problem Size}&\multirow{2}{*}{Model}&\multicolumn{5}{c}{Measure}\cr	
				\cmidrule(lr){3-7}
				& & TimeLB & TimeUB & TPR & FPR & $\Delta$ \cr
				\midrule
$n=200,p=50$ & Model 1 & 0.086 & 0.092 & 1.000 & 0.000 & 0.076  \cr
             & Model 3 & 0.086 & 0.380 & 1.000 & 0.000 & 0.180  \cr
             & Model 4 & 0.461 & 2.076 & 0.864 & 0.020 & 0.797  \cr
$n=400,p=100$& Model 1 & 0.341 & 0.360 & 1.000 & 0.000 & 0.069  \cr
             & Model 3 & 0.340 & 2.185 & 1.000 & 0.000 & 0.130  \cr
             & Model 4 & 2.258 & 17.216 & 0.871 & 0.009 & 0.777  \cr
$n=1000,p=250$& Model 1 & 2.870 & 3.017 & 1.000 & 0.000 & 0.042  \cr
              & Model 3 & 2.877 & 39.743 & 1.000 & 0.000 & 0.078  \cr 
$n=2000,p=500$& Model 1 & 20.043 & 21.047 & 1.000 & 0.000 & 0.031  \cr

				\bottomrule
			\end{tabular}	
	\end{threeparttable}}
\end{table}

\begin{table}[!htbp]	
	\centering
	\caption{True and false positive rate, and distance $\Delta$ are computed within different time caps. For sample size $n = 4000, p=1000$, we set 180s and 360s as time caps. For other sample sizes, we set 120s and 300s as time caps.}
	\label{table_scaling_not_converge}
	\resizebox{\textwidth}{!}{
		\begin{threeparttable}		
			\begin{tabular}{cccc}
				\toprule
				\multirow{2}{*}{Problem Size}&\multirow{2}{*}{Model}&\multicolumn{2}{c}{(TPR, FPR, $\Delta)$}\cr	
				\cmidrule(lr){3-4}
				& & Time Cap $=120/180$ & Time Cap $=300/360$ \cr
				\midrule
$n=1000,p=250$& Model 4& (0.864, 0.004, 0.758) & (0.864, 0.003, 0.751)  \cr
$n=2000,p=500$& Model 3& (1.000, 0.000, 0.059) & (1.000, 0.000, 0.058)  \cr
              & Model 4& (0.846, 0.002, 0.743) & (0.864, 0.002, 0.741)   \cr 
$n=4000,p=1000$& Model 1& (1.000, 0.000, 0.020) & (1.000, 0.000, 0.019)  \cr
               & Model 3& (1.000, 0.000, 0.042) & (1.000, 0.000, 0.045)   \cr & Model 4& (0.850, 0.000, 0.794) & (0.879, 0.001, 0.737)   \cr
				\bottomrule
			\end{tabular}	
	\end{threeparttable}}
\end{table}

\section{Conclusion}\label{sec_conclusion}
In this paper, we propose a mixed integer optimization algorithm for sparse SDR problems using the fact that many SDR methods are special cases of generalized eigenvalue problems. The algorithm produces certifiably optimal solution and directly controls the sparsity. In particular, the simulation results show Optimal SGEP is computationally tractable for large problems. We believe our work demonstrates that discrete optimization algorithm is applicable to problems in Statistics.

Our proposed algorithm works well under the assumption that $n > p$. In the high dimensional settings in which $p > n$, matrices $A$ and $B$ are semi-definite, which implies $\lambda_{\max}(A, B)$ could be infinity. Therefore, the most important ingredient of branch and bound algorithm will fail. We must propose another formulation or discrete algorithm to handle the high dimensional cases. 

\newpage
\renewcommand\refname{References}

\bibliographystyle{apalike}
\bibliography{References}

\end{document}